\documentclass[11pt]{article}
\pdfoutput=1
\usepackage[totalwidth=480pt, totalheight=680pt]{geometry}

\usepackage[latin1]{inputenc}


\usepackage{hyperref}

%


\usepackage{amsmath}
\usepackage{amssymb}
\usepackage{amsthm}

\usepackage{latexsym}
\usepackage{bbm}

\usepackage{booktabs}

\usepackage[above,below]{placeins}
\usepackage[title,titletoc]{appendix}




\usepackage{ifpdf}

\ifpdf
\usepackage[pdftex]{graphicx}
\else
\usepackage{graphicx}
\fi

\graphicspath{{curves/}}

\setcounter{totalnumber}{10}

\begin{document}

\ifpdf
\DeclareGraphicsExtensions{.pdf, .jpg, .tif}
\else
\DeclareGraphicsExtensions{.eps, .jpg}
\fi

\newtheorem{prop}{Proposition}

\theoremstyle{remark}
\newtheorem{rem}{Remark}

\thispagestyle{empty}


\vspace*{3cm}

\begin{center}
\begin{Large}
\textbf{One-Dimensional Pricing of CPPI}
\end{Large}

\vspace{10mm} {\bf Louis Paulot and Xavier Lacroze}

\vspace{7mm}
\emph{Sophis Technology
\\
24--26 place de la Madeleine, 75008 Paris, France}

\vspace{5mm}
{\ttfamily louis.paulot@sophis.net\\xavier.lacroze@sophis.net}

\vspace{2cm}
May 18, 2009

\emph{Revised} February 8, 2010

\end{center}

\vspace{2cm}
\hrule
\begin{abstract}

Constant Proportion Portfolio Insurance (CPPI) is an investment strategy designed to give participation in the performance of a risky asset while protecting the invested capital. This protection is however not perfect and the gap risk must be quantified. CPPI strategies are path-dependent and may have American exercise which makes their valuation complex. A naive description of the state of the portfolio would involve three or even four variables. In this paper we prove that the system can be described as a discrete-time Markov process in one single variable if the underlying asset follows a process with independent increments. This yields an efficient pricing scheme using transition probabilities. Our framework is flexible enough to handle most features of traded CPPIs including profit lock-in and other kinds of strategies with discrete-time reallocation.

\end{abstract}
\hrule

\vspace{\stretch{1}}
\noindent{\bf Keywords:} CPPI, Portfolio Insurance, Option, Pricing, Gap Risk, Markov.

\pagebreak

\setcounter{tocdepth}{3}
\tableofcontents


\parskip=4pt

\section{Introduction}

Constant Proportion Portfolio Insurance is a dynamic strategy designed to give participation in risky assets while protecting the invested capital \cite{merton1970oca}. This is achieved by periodically rebalancing between a risk-free asset (Zero-Coupon bond) and a risky asset (share, index, fund, fund of funds\ldots). In the simplest form, if the underlying asset has no jumps and if one can rebalance continuously, the final payoff depends in a deterministic way on the risky underlying. However both hypotheses are strong and do not fit real market conditions. If they are relaxed, the strategy is not as efficient: there is a small chance of not recovering at maturity the invested capital. This gap risk may be held by the issuer, so that the principal is really guaranteed to the investor. In this case, there is an option included in the product, which must be priced and hedged. For a very simple CPPI strategy with continuous rebalancing, the gap risk comes only from instantaneous jumps and can be quantified analytically \cite{cont:cpp}. With a discrete rebalancing scheme, there is a closed formula for the embedded option price if the underlying follows a Black-Scholes diffusion \cite{balder:ecs}. This formula can as well be generalized to jump-diffusion models, and more generally to Levy processes \cite{paulot2009epc}.  However these methods work only for an idealized CPPI product where there are no caps on the risky exposure, no spreads on the risk-free and financing rates, no fees, no profit lock-in, a natural bond floor... Unfortunately, real CPPIs have usually such features, which prevent from using closed formulas. Moreover, such formulas does not hold for options with strikes differing from the guaranteed amount. This makes necessary to use other methods for real-life, more complex products. As CPPIs are very path dependent, they are usually priced through Monte-Carlo simulations (see \cite{boulier} for an example). Extreme value theory has also been used to estimate the gap risk of such products \cite{bertrand2002pie}.

Monte-Carlo pricing is perfectly suited to path dependence but the dimension of the problem is generally quite large. As an example, a monthly CPPI defined on a single risky asset with monthly rebalancing and a 10 years maturity requires simulating 120 values of the underlying. This means a 120-dimensional Monte-Carlo integration. Furthermore, the tails of the distributions are crucial in the pricing. First of all, the lower tail gives the gap risk part. It must be computed with enough precision to produce a reasonable Put price and an accurate Delta. The upper tail is important to reach the correct mean value of the CPPI strategy. The payoff distribution is close to a shifted lognormal distribution with high leverage: most trajectories will end below the mean and will be compensated by a few very high terminal values.
The high path dependence and barrier-like structure of the strategy makes the Put price and its Delta converge very slowly. As an illustration, we refer to figure~\ref{convergence_mc} in which we present an example of the convergence of a put option on the CPPI portfolio. One needs 100~millions Monte-Carlo simulations (and hours of computation) to get a rough estimation of the price of the derivative.
\begin{figure}[b]
\centering
\includegraphics[width=\textwidth]{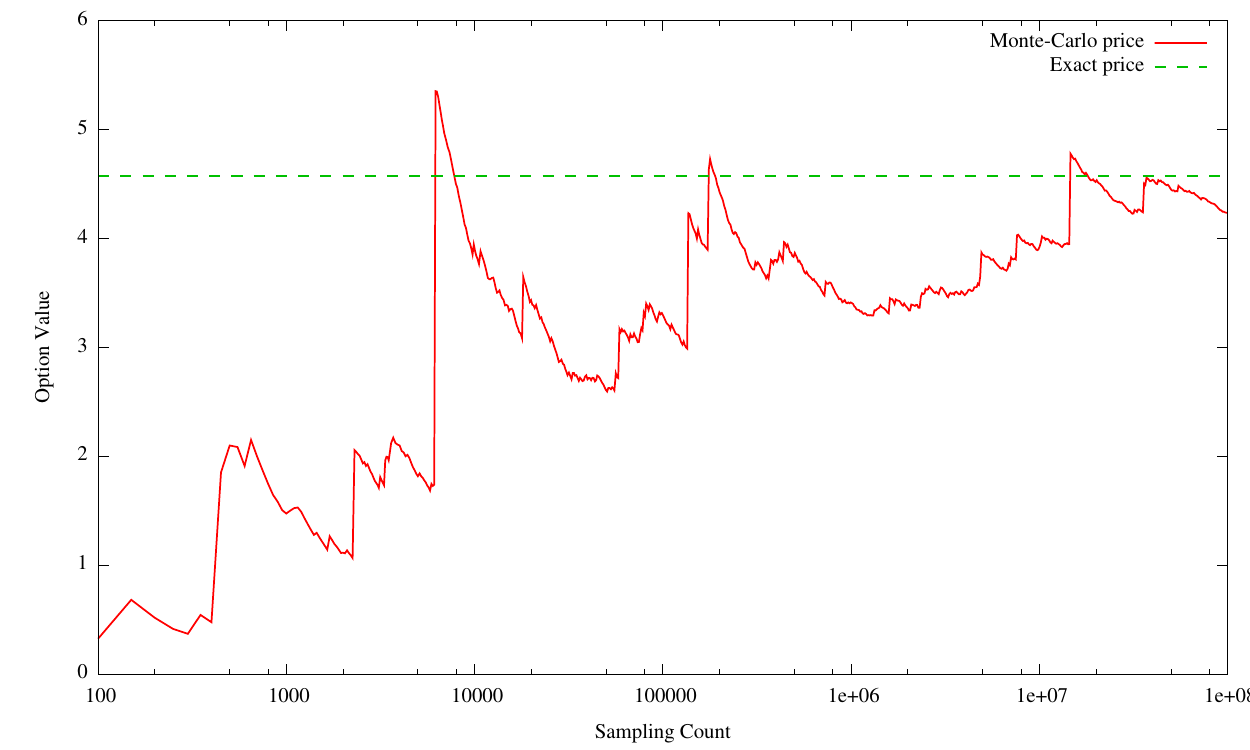}
\caption{\emph{Convergence of the Monte Carlo method for the price of a put on a CPPI portfolio as functions of the number of paths. Maturity is 10 years, multiplier is 4, reallocation is performed monthly, volatility is 35\% and initial investment is 1\,000. The strike is set to this value. The dashed line indicates the best estimate obtained in section \ref{sec:numres}.}}
\label{convergence_mc}
\end{figure}

An other way to price CPPI derivatives would be to use a PIDE scheme. This would also handle the American or more precisely Bermudan exercise which is more and more present in such products. However, there are at least three variables to propagate which depend on each other: the risky underlying value, the CPPI portfolio value and the risk asset weighting. If some profit lock-in feature is present, the guarantee level increases the dimension of the PIDE to four. The high dimensionality combined with the large number of time steps needed (at least the number of reallocations) makes this solution very computationally expensive.

There are drastic simplifications if one considers the portfolio value at rebalancing times only. We prove that if the underlying asset has independent increments, the dynamics of the portfolio at rebalancing dates is described by a discrete time Markov process in one single variable. This property leads to a simple and efficient numerical scheme. The main advantages of this technique are its speed, the smoothness of greeks and the ability to handle Bermudan exercise. For example in the simplest cases, only a few milliseconds are needed to get an accurate price. More complex cases with profit lock-in, artificial cushion or coupons take more time but never more than a few seconds.

In section \ref{sec:basic}, we start by describing our pricing method in simple cases for which the guarantee amount is constant: after introducing our working hypothesis, we formulate the main result concerning the transition probabilities of the CPPI portfolio at rebalancing dates. Then we discuss the numerical implementation of this method in section \ref{sec:numericalBasic}. In section \ref{sec:lockin}, we allow the guaranteed amount to vary over time through a feature called "profit lock-in". Under some homogeneity assumption on the rebalancing rule, we show that the problem is still one-dimensional and discuss the steps involved in the numerical implementation.

\section{Basic algorithm}
\label{sec:basic}

\subsection{Sketch of the method}
\label{sec:sketch}

At a time $t$, we consider a portfolio (the CPPI portfolio) with total value $C(t)$. This portfolio is split between a risky and a risk-free components:
\[
    C(t) = C^\text{risky}(t) + C^\text{risk-free}(t)
\]
The risk-free part is invested in zero-coupons of same maturity as the CPPI, $B(t)$, whose dynamic is completely deterministic
\[
    B(t') = B(t) e^{r(t'-t)}
\]
with $r$ the risk-free interest rate (taken constant for simplicity). The risky part is invested in an asset $S(t)$. For simplicity in this part we consider a Black-Scholes dynamics which under the risk-neutral measure is
\[
    \mathrm d S(t) = r S(t) \mathrm d t + \sigma S(t) \mathrm d W(t)
\]

A CPPI strategy is designed to protect some part of the investment and we shall denote by $G$ this "guaranteed" amount. At this point, $G$ is just some constant, which is generally taken to be the initial value of the portfolio $C(0)$. Taking a fraction $1/G$ of the portfolio, we can consider without loss of generality a unit guarantee. In this section we take therefore $G=1$.

Let $t_0 = 0 < t_1 < \dots < t_n$ be some fixed times, called rebalancing dates. At each rebalancing date~$t_i$, the allotment of the portfolio between the risky and risk-free components is adjusted (without any change in the total value of the portfolio) according to a rule specified through a non-negative deterministic function $w\!\left(t, x\right)$ called the risky asset weighting (or simply RAW):
\begin{equation}
\label{eq:rebalancing_rule}
	C^\text{risky}(t_i) = w \big(t_i, C(t_i) \big) C(t_i)
\end{equation}
For a CPPI strategy, $w$ is usually chosen of the form
\begin{equation}
\label{eq:rebalancing_rule2}
    w \big(t, C \big) = \max\!\left( m \frac{C-H(t)}{C}, 0 \right)
\end{equation}
where $m$ is called the multiplier of the strategy and $H$ is a positive deterministic function called the floor whose terminal value is the guarantee (equal to 1 here). A natural choice is to set the floor to be
\[
    H(t) = e^{-r(t_n-t)}
\]
$H(t)$ is equal to the value of a (unit nominal) zero-coupon bond one must hold to recover 1 at maturity $t_n$. Within this setup, this investment strategy indeed offers some protection: when the portfolio poorly performs the investment is smoothly allocated to the risk-free asset. With continuous rebalancing and without jumps in the asset price, the bond floor cannot be breached. With discrete rebalancing there is still a probability of going under the floor: the floor is breached if the risky asset goes down by more than $1/m$ between two rebalancing dates.

Between rebalancing dates, the composition of the portfolio is left unchanged. For $t \in [t_i,t_{i+1}]$, this reads
\begin{equation}
	\label{eq:portfolio_dynamic}
	C(t) = C^\text{risky}(t_i) \frac{S(t)}{S(t_i)} + C^\text{risk-free}(t_i) \frac{B(t)}{B(t_i)}
\end{equation}
In the case $t=t_{i+1}$, this equation gives the evolution of the portfolio between two consecutive rebalancing dates:
\begin{equation*}
    C(t_{i+1}) = C^\text{risky}(t_{i}) \frac{S(t_{i+1})}{S(t_i)} + C^\text{risk-free}(t_{i}) \frac{B(t_{i+1})}{B(t_i)}
\end{equation*}
or equivalently, eliminating $C^\text{risky}$ using equation \eqref{eq:rebalancing_rule},
\begin{equation}
\label{eq:portfolio}
    C(t_{i+1}) =
    w \big(t_i, C(t_i) \big) C(t_{i}) \frac{S(t_{i+1})}{S(t_i)} + \left(1 - w \big(t_i, C(t_i) \big) \right) C(t_{i}) \frac{B(t_{i+1})}{B(t_i)}
\end{equation}
Then, at date $t_{i+1}$, the manager computes the new Risky Asset Weighting $w \big(t_{i+1}, C(t_{i+1}) \big)$ and reallocates the resources. Once again, we emphasize that this rebalancing procedure has no instantaneous effect on the total value of the portfolio.

In a first step, we look at the risk-neutral probability distribution of the portfolio value over a single rebalancing period. It can be seen from equation~\eqref{eq:portfolio} that $C(t_{i+1})$ depends on $C(t_i)$ and on the ratio of the risky asset values $S(t_{i+1}) / S(t_{i})$. However, if the process $\log S(t)$ has independent increments (which clearly is the case of the lognormal dynamic we consider here), this ratio does not depend on $S(t_i)$ itself. If in addition we suppose that the risky asset has independent increments, the distribution of $C(t_{i+1})$ conditionally on time $t_i$ depends only on $C(t_i)$:
\begin{eqnarray}
\mathbb P \Big[ C(t_{i+1}) < y \, \big\vert \, \mathcal{F}_{t_i} \Big] & = &
\mathbb P \Big[ C(t_{i+1}) < y \, \big\vert \, C(t_i) \Big] \nonumber \\
   & = & \mathbb P \left[ \frac{S(t_{i+1})}{S(t_i)} < \frac{y - \left(1 - w \big(t_i, C(t_i) \big) \right) C(t_i) B(t_{i+1}) / B(t_i) }{w \big(t_i, C(t_i) \big) C(t_{i})} \, \bigg\vert \, C(t_i) \right]
\label{eq:singleperiod1}
\end{eqnarray}
This equation relates the distribution of $C(t_{i+1})$ conditionally on $C(t_i)$ to the distribution of $S(t_{i+1}) / S(t_i)$. In the case of Black-Scholes diffusion that we consider, the cumulative function is
\begin{equation}
\mathbb P \left[ \frac{S(t_{i+1})}{S(t_i)} < z \right] = \mathcal N\!\left( \frac{\ln z - \left(r - \frac{1}{2} \sigma^2 \right)(t_{i+1} - t_i)}{ \sigma \sqrt{t_{i+1}-t_i}}\right)
\label{eq:singleperiod2}
\end{equation}
Combining these two equations we get an explicit formula for $\mathbb P \Big[ C(t_{i+1}) < y \, \big\vert \, \mathcal{F}_{t_i} \Big]$ as a function of $C(t_i)$. The process for the portfolio value at rebalancing times is therefore a Markov process in discrete time. The general framework for the study of Markov processes leads to introducing the risk-neutral probability density of $C(t_j)$ conditionally on $C(t_i)$, also called transition kernel for the period $[t_i,t_{i+1}]$:
\begin{equation*}
    \Phi_{t_i,t_{i+1}} (x,y) = \frac{\partial}{\partial y} \mathbb P \Big[ C(t_{i+1}) < y \, \big\vert \, C(t_i) = x \Big]
\end{equation*}
which is computed explicitly using equations~\eqref{eq:singleperiod1} and~\eqref{eq:singleperiod1}.

Let us consider a European derivative product $V$ written on the CPPI portfolio with terminal payoff at $t_n$ given by
\[
    V\!\big(t_n,C(t_n)\big) = P\big(C(t_n)\big)
\]
If we know the value of this derivative at time $t_{i+1}$ for all values of $C(t_{i+1})$, its value at time $t_i$ and CPPI value $C(t_i)$  is given by the discounted risk-neutral expected value
\begin{eqnarray*}
V\!\big(t_i, C(t_i)\big) & = & e^{-r(t_{i+1} - t_i)} \ \mathbb E\Big[ V\!\big(t_{i+1}, C(t_{i+1})\big) \, \big\vert \, C(t_i) \Big] \nonumber\\
& = & e^{-r(t_{i+1} - t_i)} \int \mathrm d y \ \Phi_{t_i,t_{i+1}}\!\big(C(t_{i}),y\big) \ V\!\big( t_{i+1}, y\big)
\end{eqnarray*}
Starting from the terminal payoff and applying recursively this formula yields today fair price of the derivative $V\!\big(t_0, C(t_0)\big)$.

This backward recursion is also suited to Bermudan exercise (on rebalancing dates):
\begin{equation*}
V\!\big(t_i, C(t_i)\big) = \max\!\left( e^{-r(t_{i+1} - t_i)} \int \mathrm d y \ \Phi_{t_i,t_{i+1}}\!\big(C(t_{i}),y\big) \ V\!\big( t_{i+1}, y\big) \ , \ P\big( C(t_i)\big) \right)
\end{equation*}

\subsection{Mathematical formulation}

We collect the previous results (with slightly more general hypothesis) in the following proposition.
\begin{prop}
\label{prop:markov}
Let us consider a CPPI portfolio with a given set of rebalancing dates $t_0,\dots,t_n$ and guaranteed amount $G$. If the following hypotheses hold:
\begin{itemize}
\item the risky asset weighting is given by a deterministic function $w(t,C/G)$;
\item the underlying risky asset follows a one-dimensional process, its logarithm has independent increments and it has a density
\begin{equation*}
 \mathcal \phi_{t,t'}(z) = \partial_z \mathbb P \left[ \frac{S(t')}{S(t)} < z \right]
\end{equation*}
 which depends only on $t$, $t'$ and $z$;
\item the risk free asset is deterministic with value $B(t)$;
\end{itemize}
then the CPPI portfolio value taken at rebalancing dates $C(t_i)$ is a Markov process in discrete time. The probability transition operator of the normalized variable $C(t_i)/G$
\begin{equation*}
\Phi_{t_i,t_{i+1}} (x,y) = \mathbb E\!\left[ \delta\!\left(\frac{C(t_{i+1})}{G}-y\right) \, \Big\vert \, \frac{C(t_i)}{G} = x \right]
\end{equation*}
given by
\begin{equation}
\Phi_{t_i,t_{i+1}} (x,y) = \left\{
\begin{array}{ll}
    \displaystyle \frac{1}{w \big(t_i, x \big) x} \ \phi_{t_i,t_{i+1}}\!\left[\frac{y - \left(1 - w \big(t_i, x \big) \right) x B(t_{i+1}) / B(t_i) }{w \big(t_i, x \big) x} \right] & \mathrm{\ if\ }w \big(t_i, x \big) x > 0
        \vspace{6pt}
    \\
    \displaystyle \delta\!\left(y - x \frac{B(t_{i+1})}{B(t_i)} \right) & \mathrm{\ if\ }w \big(t_i, x \big) x = 0
\end{array} \right.
\label{eq:cumudensity}
\end{equation}
European derivatives on the CPPI portfolio are valued recursively through
\begin{equation}
V\!\big(t_i, C(t_i)\big) = e^{-r(t_{i+1} - t_i)} \int \mathrm d y \ \Phi_{t_i,t_{i+1}}\!\left(\frac{C(t_{i})}{G},y\right) \ V\!\big( t_{i+1}, G\, y\big)
\label{eq:recursion}
\end{equation}
and Bermudan derivatives using
\begin{equation}
\label{eq:recursionAmerican}
V\!\big(t_i, C(t_i)\big) = \max\!\left( e^{-r(t_{i+1} - t_i)} \int \mathrm d y \ \Phi_{t_i,t_{i+1}}\!\left(\frac{C(t_{i})}{G},y\right) \ V\!\big( t_{i+1}, G\, y\big) \ , \ P\big( C(t_i)\big) \right)
\end{equation}
\end{prop}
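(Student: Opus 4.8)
The plan is to proceed in three stages: first establish that $\bigl(C(t_i)\bigr)_{i=0,\dots,n}$ is a discrete-time Markov process, then compute its transition kernel by an explicit change of variables, and finally derive the European and Bermudan valuation formulas by dynamic programming. Set $X_i := C(t_i)/G$. Combining \eqref{eq:portfolio} with the assumed form of the weighting and dividing by $G$ gives
\[
  X_{i+1} = w(t_i, X_i)\,X_i\,\frac{S(t_{i+1})}{S(t_i)} + \bigl(1 - w(t_i, X_i)\bigr) X_i\,\frac{B(t_{i+1})}{B(t_i)} =: F_i\!\left(X_i,\ \frac{S(t_{i+1})}{S(t_i)}\right),
\]
so $X_{i+1}$ is a fixed Borel function of $X_i$ and of the multiplicative increment $R_i := S(t_{i+1})/S(t_i)$. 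By induction, starting from the constant $X_0 = C(0)/G$, each $X_i$ is measurable with respect to $\sigma\bigl(S(t_0),\dots,S(t_i)\bigr)\subseteq\mathcal F_{t_i}$. Since $\log S$ has independent increments, $\log R_i = \log S(t_{i+1}) - \log S(t_i)$, hence $R_i$ itself, is independent of $\mathcal F_{t_i}$, and its law has density $\phi_{t_i,t_{i+1}}$ depending only on $(t_i,t_{i+1})$. A measurable function of an $\mathcal F_{t_i}$-measurable variable and of a variable independent of $\mathcal F_{t_i}$ has conditional law given $\mathcal F_{t_i}$ depending only on the former; applied to $X_{i+1}=F_i(X_i,R_i)$ this yields $\mathbb P[X_{i+1}\in\cdot\mid\mathcal F_{t_i}]=\mathbb P[X_{i+1}\in\cdot\mid X_i]$, i.e. the Markov property, and in particular \eqref{eq:singleperiod1}.

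\emph{Transition kernel.} Fix $i$ and condition on $X_i = x$. If $w(t_i,x)\,x>0$, the map $z\mapsto y = w(t_i,x)\,x\,z + \bigl(1-w(t_i,x)\bigr)x\,B(t_{i+1})/B(t_i)$ is an increasing affine bijection of the real line with constant derivative $w(t_i,x)\,x$; pushing the law of $R_i$ forward through it and applying the change-of-variables formula for densities (with $\phi_{t_i,t_{i+1}}$ vanishing on negative arguments) gives exactly the first line of \eqref{eq:cumudensity}, which is the differentiated form of \eqref{eq:singleperiod1}. If instead $w(t_i,x)\,x = 0$ — either because $x=0$ or because the cap/floor mechanism in \eqref{eq:rebalancing_rule2} has switched the risky weight off — the risky term vanishes and $X_{i+1} = x\,B(t_{i+1})/B(t_i)$ is deterministic, so the conditional law is the Dirac mass at that point, the second line of \eqref{eq:cumudensity}. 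In both cases $\Phi_{t_i,t_{i+1}}(x,\cdot)=\mathbb E[\delta(X_{i+1}-\cdot)\mid X_i=x]$ is by construction the density of this conditional law.

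\emph{Valuation.} For a European claim with terminal payoff $P(C(t_n))$, let $V(t_i,c)$ denote the time-$t_i$ discounted risk-neutral value on $\{C(t_i)=c\}$. The standard no-arbitrage/tower argument gives $V(t_i,C(t_i)) = e^{-r(t_{i+1}-t_i)}\,\mathbb E[V(t_{i+1},C(t_{i+1}))\mid\mathcal F_{t_i}]$; by the Markov property the conditioning reduces to conditioning on $C(t_i)$, and integrating against the kernel above, with the rescalings $C(t_i)/G$ and $C(t_{i+1})=G\,y$, yields \eqref{eq:recursion}. For a Bermudan claim exercisable at the rebalancing dates, the value is the Snell envelope: at each $t_i$ it equals the larger of the intrinsic value $P(C(t_i))$ and the continuation value given by the right-hand side of \eqref{eq:recursion}, which is \eqref{eq:recursionAmerican}. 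Backward induction from $V(t_n,c)=P(c)$ completes the argument.

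The only genuinely delicate point is the measurability-and-independence bookkeeping in the first stage: one must check that $C(t_i)$ is a deterministic function of $S(t_0),\dots,S(t_i)$ alone — so that $R_i$ is independent of it — rather than carrying any extra randomness. This is exactly where the hypotheses that $w$ is deterministic, that $S$ is one-dimensional, and that $\log S$ has independent increments are used in full force; once this is in place, the remainder is a routine change of variables and a standard dynamic-programming recursion.
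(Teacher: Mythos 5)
Your proposal is correct and follows essentially the same route as the paper, whose own proof simply refers back to the derivation of equations \eqref{eq:singleperiod1}--\eqref{eq:singleperiod2} in Section~\ref{sec:sketch} applied to a fraction $1/G$ of the portfolio with the lognormal law replaced by the general density $\phi_{t_i,t_{i+1}}$. You merely make explicit the measurability-and-independence bookkeeping, the affine change of variables behind \eqref{eq:cumudensity}, and the dynamic-programming step, which the paper leaves implicit.
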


\begin{proof}
The proof goes along the lines of the last section for a fraction $1/G$ of the portfolio. The explicit expressions for the risk-free asset and the floor are simply relaxed and the lognormal cumulative law is replaced by a more general cumulative law. The Markovian property stems from the fact that the transition probability depends only on $C(t_i)/G$.
\end{proof}

\begin{rem}
The condition on the density between rebalancing dates excludes stochastic volatility and more generally models with hidden variables. In fact such models can be used within our framework at the cost of increasing the dimensionality of the Markov process by the number of additional variables. For example a stochastic volatility model for the underlying will be converted into a 2-dimensional discrete-time Markov process. As matrix-vector multiplications on a grid of size $N$ have a complexity $O(N^2)$, adding a second dimension with grid size $M$ will increase by a factor $M^2$ the computation time. Therefore it looks more reasonable to use a volatility regime-switching model rather than a model with continuous stochastic volatility. In such a model, there are a small number $M$ of regimes, each with its own volatility. One computes the joint probability of the CPPI portfolio value and the volatility regime conditionally on the initial portfolio value and volatility regime.
\end{rem}

\section{Numerical implementation}
\label{sec:numericalBasic}

\subsection{Order 2 scheme}

The numerical implementation is rather direct from proposition \ref{prop:markov}.
\begin{itemize}
\item
The normalized value $C(t_i)/G$ of the portfolio was previously allowed to be any real number. This real line is discretized into a grid $h_j$, $j=0\ldots N$. The (rescaled) initial value of the portfolio $C(t_0)/G$ is supposed to lay on the $j_0^\text{th}$ grid node: $C(t_0)=G\, h_{j_0}$.
\item
The transition operator of equation \eqref{eq:cumudensity} acting on real functions is replaced with a transition matrix  $M^{(i,i+1)}_{jk}$ acting on vectors.\\ 
Let $\mathcal Q_{t_i,t_{i+1}}$ and $\mathcal Q_{t_i,t_{i+1}}^{(1)}$ denote the cumulative density and partial expected value of the underlying process
\begin{equation*}
\begin{array}{rclcl}
\mathcal Q_{t_i,t_{i+1}}(z) &=& \displaystyle\mathbb P \left[ \frac{S(t')}{S(t)} < z \right]
&=& \displaystyle \int_{-\infty}^z \!\mathrm d y \, \phi_{t_i,t_{i+1}}(y)
\\
\mathcal Q_{t_i,t_{i+1}}^1(z) &=& \displaystyle\mathbb E \left[ \frac{S(t')}{S(t)} \mathbbm{1}_{]-\infty ; z[}\!\left(\frac{S(t')}{S(t)}\right) \right]
&=& \displaystyle \int_{-\infty}^z \!\mathrm d y  \, \phi_{t_i,t_{i+1}}(y) \, y
\end{array}
\end{equation*}

Let $X^{(i)}_{jk}$ denote the value of the underlying return $\frac{S(t_{i+1})}{S(t_i)}$ for which a portfolio worth $G h_j$ at $t_i$ is worth $G h_k$ at $t_{i+1}$: 
\begin{equation*}
X^{(i)}_{jk} = \frac{h_k - \left(1 - w \big(t_i, h_j \big) \right) h_j B(t_{i+1}) / B(t_i) }{w \big(t_i, h_j \big) h_j}
\end{equation*}

Starting from a grid point $h_j$ at time $t_i$,
for $w \big(t_i, h_j \big) h_j > 0$ the probability of being between $h_k$ and $h_{k+1}$ at time $t_{i+1}$ is the finite difference
\begin{equation*}
Q^{(i,i+1)}_{jk} = \displaystyle\mathbb P \left[ h_k \leq \frac{C(t_{i+1})}{G} < h_{k+1} \ \Big\vert \   \frac{C(t_i)}{G} = h_j \right]
 =  
\mathcal Q_{t_i,t_{i+1}}\!\left(X^{(i)}_{j,k+1} \right) - \mathcal Q_{t_i,t_{i+1}}\!\left(X^{(i)}_{jk} \right)
\end{equation*}
Similarly the partial expected value is
\begin{eqnarray*}
Q^{1 \, (i,i+1)}_{jk} &=& \displaystyle\mathbb E\!\left[ \frac{C(t_{i+1})}{G} \mathbbm{1}_{ [h_k ; h_{k+1} [}\!\left( \frac{C(t_{i+1})}{G} \right) \ \Big\vert \   \frac{C(t_i)}{G} = h_j \right]
\\
 &=&  w \big(t_i, h_j \big) h_j 
\bigg[ \mathcal Q^1_{t_i,t_{i+1}}\!\left(X^{(i)}_{j,k+1}\right) - \mathcal Q^1_{t_i,t_{i+1}}\!\left(X^{(i)}_{jk}\right) \bigg]
\\
&& + \left(1 - w \big(t_i, h_j \big) \right) h_j \frac{B(t_{i+1})}{B(t_i)}  \bigg[ \mathcal Q_{t_i,t_{i+1}}\!\left(X^{(i)}_{j,k+1}\right) - \mathcal Q_{t_i,t_{i+1}}\!\left(X^{(i)}_{jk}\right) \bigg]
\end{eqnarray*}

For $w \big(t_i, h_j \big) h_j = 0$ we have respectively
\begin{eqnarray*}
Q^{(i,i+1)}_{jk} &=& \mathbbm 1_{\left[h_k;h_{k+1}\right[}\!\left(h_j\frac{B(t_{i+1})}{B(t_i)}\right)
\\
Q^{1\, (i,i+1)}_{jk} &=& h_j\frac{B(t_{i+1})}{B(t_i)} \, \mathbbm 1_{\left[h_k;h_{k+1}\right[}\!\left(h_j\frac{B(t_{i+1})}{B(t_i)}\right)
\end{eqnarray*}

Using these matrices, we take as transition matrix between $t_i$ and $t_{i+1}$
\begin{equation*}
M^{(i,i+1)}_{jk} = M^{+ \, (i,i+1)}_{jk} + M^{- \, (i,i+1)}_{jk}
\end{equation*}
with
\begin{eqnarray*}
M^{+ \, (i,i+1)}_{jk} &=& \frac{h_{k+1} Q^{(i,i+1)}_{jk} - Q^{1\, (i,i+1)}_{jk}}{h_{k+1}-h_k}
\\
M^{- \, (i,i+1)}_{jk} &=& \frac{Q^{1\, (i,i+1)}_{j,k-1} - h_{k-1} Q^{(i,i+1)}_{j,k-1}}{h_k-h_{k-1}}
\end{eqnarray*}
These matrices are such that the probability  $Q^{(i,i+1)}_{jk}$ to be inside the interval $[h_k,h_{k+1}[$ and the partial mean $Q^{1\, (i,i+1)}_{jk}$ are exact:
\begin{eqnarray*}
\int_{h_k}^{h_{k+1}}\!\mathrm d y \, \Phi_{(t_i,t_{i+1})}(h_j,y)\quad = & Q^{(i,i+1)}_{jk} & =\quad M^{+ \, (i,i+1)}_{jk} + M^{- \, (i,i+1)}_{j,k+1}
\\
\int_{h_k}^{h_{k+1}}\!\mathrm d y \, \Phi_{(t_i,t_{i+1})}(h_j,y)\,y \quad = & Q^{1\, (i,i+1)}_{jk} & =\quad M^{+ \, (i,i+1)}_{jk} h_k + M^{- \, (i,i+1)}_{j,k+1} h_{k+1}
\end{eqnarray*}

\item For a European product, we define a vector $V^{(i)}_j$ with the value of the derivative at date $t_i$ conditionally to $C(t_i)=G\, h_j$. The backward propagation \eqref{eq:recursion} is turned into the matrix--vector product
\begin{equation*}
    V^{(i)}_j = e^{-r(t_{i+1} - t_i)} \sum_k M^{(i,i+1)}_{jk} \, V^{(i+1)}_k
\end{equation*}
with terminal condition
\begin{equation*}
    V^{(n)}_j = P(G\, h_j)
\end{equation*}
The fair price of the derivative at time 0 is given by
\begin{equation*}
    V(t_0,C(t_0)) = V^{(0)}_{j_0}
\end{equation*}
\item In the case of a Bermudan product the recursion is
\begin{equation*}
    V^{(i)}_j = \max\!\left( e^{-r(t_{i+1} - t_i)} \sum_k M^{(i,i+1)}_{jk} \, V^{(i+1)}_k \ ,\ P(G\, h_j) \right)
\end{equation*}
\end{itemize}

\subsection{Error analysis}
\label{sec:error}

We now sketch an analysis of the numerical error for a European payoff.
Denoting by $N$ the number of grid points and omitting the discount factor, the error in backward propagation on one time step from $t_{i+1}$ to $t_i$  is
\begin{eqnarray*}
e_i(h_j) &=& \sum_{k=0}^N M^{(i,i+1)}_{jk} V^{(i+1)}(h_k) - \int \mathrm d y \, \Phi_{t_i,t_{i+1}} (h_j,y) V^{(i+1)}(y) 
\\
&=& \sum_{k=0}^{N-1} \left[  M^{+ \, (i,i+1)}_{jk} V^{(i+1)}(h_k) +  M^{- \, (i,i+1)}_{j,k+1} V^{(i+1)}(h_{k+1}) - \int_{h_k}^{h_{k+1}}\!\mathrm d y \, \Phi_{t_i,t_{i+1}} (h_j,y) V^{(i+1)}(y) 
  \right] 
\end{eqnarray*}
$M^{+ \, (i,i+1)}_{jk}$ and $M^{- \, (i,i+1)}_{j,k+1}$ have been defined such that they give exact results on constant and linear terms on each interval $[h_k;h_{k+1}[$:
\begin{multline*}
M^{+ \, (i,i+1)}_{jk} V(h_k) +  M^{- \, (i,i+1)}_{j,k+1} V(h_{k+1}) = 
\\
\int_{h_k}^{h_{k+1}}\!\mathrm d y \, \Phi_{t_i,t_{i+1}}(h_j,y) \left[ \frac{h_{k+1}-y}{h_{k+1}-h_k} V^{(i+1)}(h_k) + \frac{y-h_k}{h_{k+1}-h_k} V^{(i+1)}(h_{k+1}) \right]
\end{multline*}
As in the case of a trapezoidal quadrature, we are left with a second order term
\begin{equation*}
e_i(h_j) = \sum_{k=0}^{N-1}\left[ \frac{1}{12} \Phi_{t_i,t_{i+1}}(h_j,h_k) {V^{(i+1)}}''\!(h_k)\,  \Delta h_k^3 + o\!\left(\Delta h_k^3\right) \right]
\end{equation*}
with $\Delta h_k = h_{k+1}-h_k$. The second derivative can be evaluated by finite differences at the same order of precision. We denote it by $D^2 V_k^{(i)}$.

The error computed at time $t_i$ and grid node $h_j$ must then be backward propagated to time $t_0$ to get the contribution to the final price. Summing errors from all time steps and grid points we get the total error for a European payoff
\begin{equation*}
e = \frac{1}{12} \sum_{k=0}^{N-1} \left[ \left( \sum_{i=1}^n e^{-r (t_i- t_0)} M^{(0,i)}_{j_0,k}\, D^2 V_k^{(i)} \right) \Delta h_k^3 + o\!\left(\Delta h_k^3\right) \right]
\end{equation*}
As $\Delta h_k$ is proportional to $N^{-1}$, we end up with an asymptotic error
\begin{equation*}
e(h_j) = O\!\left(N^{-2}\right)
\end{equation*}
Moreover, minimizing the error under the constraint of a constant $\displaystyle \sum_k \Delta h_k$ gives an optimal grid step
\begin{equation}
\label{eq:optimalgrid1}
\Delta h_k \propto \frac{1}{\sqrt{ \displaystyle \left\vert \sum_{i=1}^n e^{-r (t_i- t_0)} M^{(0,i)}_{j_0,k}\, D^2 V_k^{(i)}\right\vert}}
\end{equation}
with $M^{(0,i)}$ the transition matrix between time $t_0$ and $t_i$: $M^{(0,i)} = \displaystyle \prod_{l=0}^{i-1} M^{(l,l+1)}$. In order to estimate the optimal grid, one can either use an approximative analytical solution for the portfolio values and transition matrices or run a first round of the algorithm using a basic coarser grid as an initial guess.

\subsection{Higher order schemes}

The algorithm can be generalized to reach a  $O(N^{-p})$ convergence for any $p$. To construct the transition matrix in the second order case, we have weighted the boundaries of all grid intervals in order to match both the probability to be in the interval and the partial mean on the interval. Introducing $p-2$ intermediate grid points inside all intervals allows to match all conditional moments on every interval up to order $p-1$. Following the lines of the error analysis of section \ref{sec:error}, one gets an $O(N^{-p})$ error
\begin{equation*}
e \propto \sum_{k=0}^{N-1} \left[ \left( \sum_{i=1}^n e^{-r (t_i- t_0)} M^{(0,i)}_{j_0,k}\, D^p V_k^{(i)} \right) \Delta h_k^{p+1} + o\!\left(\Delta h_k^{p+1}\right) \right]
\end{equation*}
with an optimal grid for a European payoff given by
\begin{equation}
\label{eq:optimalgrid2}
\Delta h_k \propto \frac{1}{\sqrt[p]{\displaystyle \left\vert \sum_{i=1}^n e^{-r (t_i- t_0)} M^{(0,i)}_{j_0,k}\, D^p V_k^{(i)}\right\vert }}
\end{equation}
There is however a drawback: this (partial) moment matching procedure may introduce negative transition probabilities on some grid points, which could result in numerical instabilities.

\subsection{Numerical results}
\label{sec:numres}

We consider a put on a CPPI portfolio with strike at the guaranteed amount. Maturity is ten years, the portfolio is reallocated monthly with multiplier 4; the bond floor is linear starting at 75\% so that the initial investment in the risky asset is 100\%. The initial portfolio value is 1000. On rebalancing dates, proportional fees are removed from the portfolio value at an annual rate of 30~bp. For our test we choose a constant, flat 35\% volatility. The yield curve is the EUR yield curve on January 15, 2010.

Prices and computation times (CPU time) as functions of the grid size are presented in table \ref{conv-nv-time} and figure \ref{fig:convergence} for four numerical schemes: second (II)  and third (III) order schemes using the grid described in the next paragraph and second (II') and third (III') order schemes using the optimal grids computed using formulas \eqref{eq:optimalgrid1} and \eqref{eq:optimalgrid2} with $p=3$, bootstrapped from the previous grid.

\newcommand{\withrelerr}[2]{$\displaystyle\mathop{#2}_{\tiny #1}$}
\begin{table}[!h]
\centering
\begin{tabular}{ccccc}
\toprule
Grid size & II & II' & III & III' \\
\midrule
50 \vspace{1mm}& \withrelerr{0.078}{4.713352}  &	 \withrelerr{0.156}{3.998833} & \withrelerr{0.094}{4.230729}  &  \withrelerr{0.219}{4.183711}  \\

100  \vspace{1mm}& \withrelerr{0.156}{4.614899}  &\withrelerr{0.438}{4.461027} &	\withrelerr{0.234}{4.492105}  &	 \withrelerr{0.641}{4.487217} \\

200  \vspace{1mm}& \withrelerr{0.531}{4.61589}   &	\withrelerr{1.641}{4.549288}  &	\withrelerr{0.781}{4.572986}  &	 \withrelerr{2.313}{4.548754} \\

300	\vspace{1mm} & \withrelerr{1.188}{4.594584}   &	\withrelerr{3.641}{4.578181}  &	\withrelerr{1.688}{4.572749}  &	 \withrelerr{4.969}{4.57237}\\

400	\vspace{1mm} & \withrelerr{2.063}{4.586547}   &	\withrelerr{6.203}{4.575781}  &	\withrelerr{2.844}{4.572500}  &	 \withrelerr{8.469}{4.572397} \\

500 \vspace{1mm}& \withrelerr{3.250}{4.582519}  &	\withrelerr{9.766}{4.574238}  &	\withrelerr{4.438}{4.572446}  &	 \withrelerr{13.734}{4.572311} \\

750 \vspace{1mm}& \withrelerr{6.859}{4.578084} &	\withrelerr{23.000}{4.573407} &	\withrelerr{10.469}{4.572371}  &	 \withrelerr{29.063}{4.572382} \\

1000 \vspace{1mm}& \withrelerr{12.234}{4.576436} &	\withrelerr{40.531}{4.572998} &	\withrelerr{17.813}{4.572351} &	 \withrelerr{50.844}{4.572309} \\

2000 \vspace{1mm}& \withrelerr{47.703}{4.574094} &	\withrelerr{166.281}{4.57253} &	\withrelerr{71.375}{4.572314} &	 \withrelerr{196.391}{4.572327} \\

3000 \vspace{1mm}& \withrelerr{112.016}{4.573509} &	\withrelerr{356.875}{4.572426} &	\withrelerr{163.547}{4.572333} &	 \withrelerr{436.609}{4.572319} \\
\bottomrule
\end{tabular}
\caption{\emph{Price of a put on a CPPI portfolio (top numbers) and CPU time in seconds (bottom) as a function of the grid size in four implementations of the algorithm: (II)~second order scheme,  (II')~second order scheme with computation of an optimal grid, (III)~third order scheme, (III')~third order scheme with computation of an optimal grid. }}
\label{conv-nv-time}
\end{table}

\begin{figure}[p]
\centering
\includegraphics[width=\textwidth]{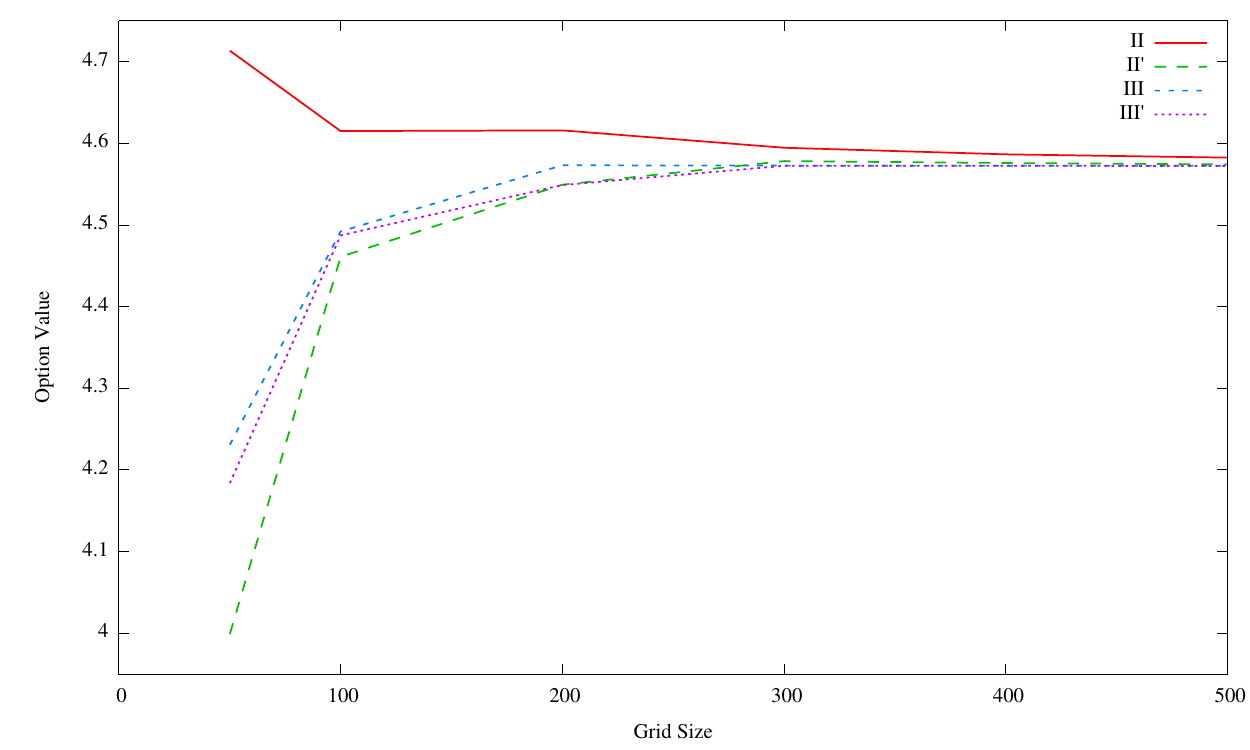}
\includegraphics[width=\textwidth]{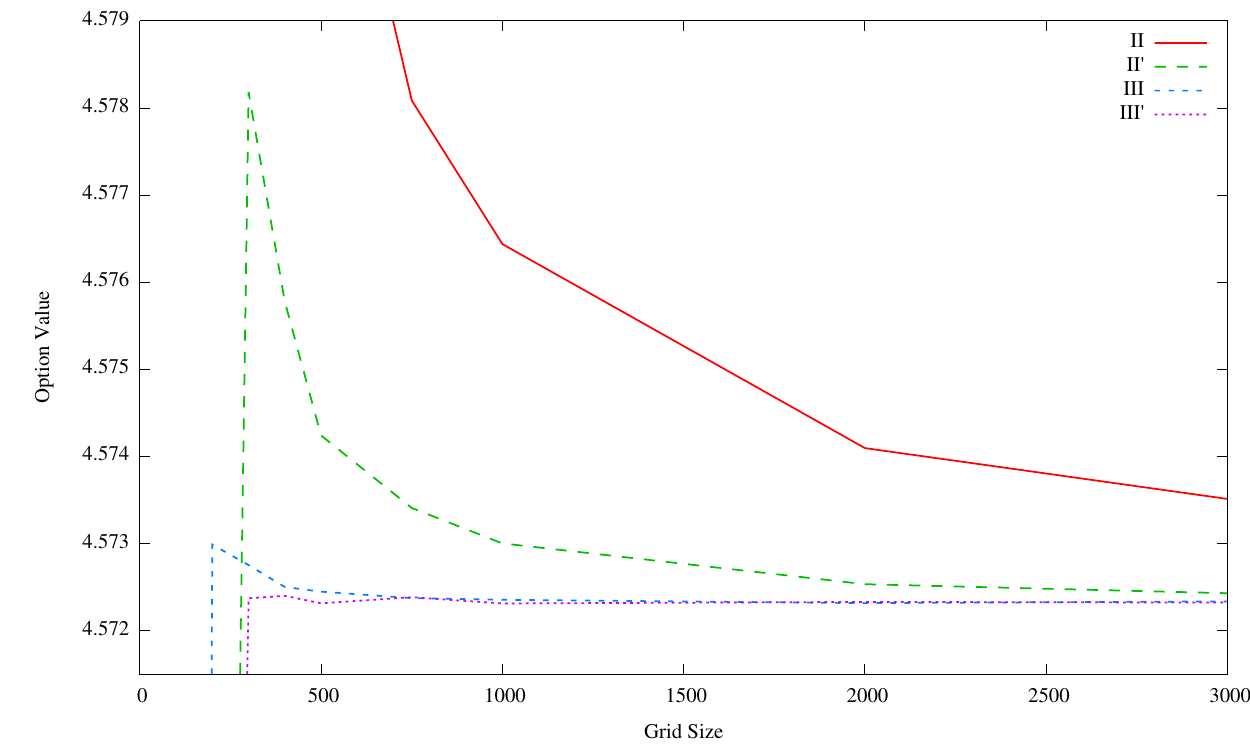}
\caption{\emph{Price of a put on a CPPI portfolio as a function of the grid size in four implementations of the algorithm: (II)~second order scheme,  (II')~second order scheme with computation of an optimal grid, (III)~third order scheme, (III')~third order scheme with computation of an optimal grid.}}
\label{fig:convergence}
\end{figure}

The grid used in cases (II) and (III) is built in the following way. We first evaluate an upper bound inverting the cumulative distribution  of a self-financing CPPI with natural bond floor and continuous reallocation: this distribution is a shifted lognormal distribution with volatility $m \sigma$. We set the bound such that the cumulative at this bound is $1-10^{-20}$. In the worst case where the risky asset drops to zero, the loss is equal to $m$ times the current portfolio value. Hence we set the lower bound to $(1-m)$ times the upper bound. These bounds are usually very large but the grid must be refined where there is convexity in the CPPI price, according to the error analysis of the previous section. We therefore take a parametric form for the grid between the bounds. We split the grid in three parts: a log-linear ($y=ae^{bx}$) central region which includes the bond floor value at all times and the strike of the option or the guarantee and two log-quadratic ($y=c + de^{fx^2}$) outer regions. Regions are glued together so that the parametric function describing the grid is $C^1$. This particular choice of the parametric form allows us to have sufficient refinement in the central region while reaching extreme bounds.

We observe that for all schemes, prices smoothly converge as the grid size is increased. As expected, third order schemes perform better than second order ones. Asymptotically, the computation of an optimal grid gives better results. However at lower grid sizes the error is larger; in addition the computation is almost three times longer. In practice, we therefore prefer scheme (III) which gives a smooth and fast convergence.

A precise price can be obtained\footnote{Computations were performed on a standard PC equipped with an Intel Core 2 6600 at 2.4~GHz and 3.50~GB of memory. We used the optimized Intel's Math Kernel Library to perform matrix-vector multiplications. Quoted times are CPU times.} in less than 1s. When the rates or the volatility are not constant, one needs to compute a transition matrix for each rebalancing period. In our test case, this means we generate 120 matrices. When all parameters are constant or piecewise constant, the number of matrices to be computed can be greatly reduced and the pricing is faster.

\section{Profit lock-in}
\label{sec:lockin}

\subsection{Pricing with profit lock-in}

CPPI strategies often include a profit lock-in feature: a fraction $\lambda$ of the performance is periodically locked in. $G$ therefore becomes a piecewise constant stochastic process $G(t)$. The floor is scaled accordingly as
\begin{equation*}
  H(t_i) = e^{-r(t_n-t_i)} G(t_i)
\end{equation*}
such that the terminal value of the CPPI portfolio should ideally be at least $G(t_n)$.

We denote by $t_{\ell_I}$, $I=1 \ldots N$ the subset of rebalancing dates at which the profit lock-in occurs (typically annually). For convenience, we set $\ell_0 = 0$ and $\ell_N = n$. Whenever the performance of the portfolio is positive between $t_{\ell_I}$ and $t_{\ell_{I+1}}$ the guarantee is raised to
\begin{equation}
\label{eq:ratchet}
  G\!\left( t_{\ell_{I+1}} \right) =  G\!\left( t_{\ell_I} \right) + \lambda \left[  C\!\left( t_{\ell_{I+1}} \right) -  C\!\left( t_{\ell_I} \right) \right]^+
\end{equation}
This expression can be recast into the more general form
\begin{equation}
\frac{G(t_{\ell_{I+1}})}{G(t_{\ell_I})} = f\!\left(\frac{C(t_{\ell_I})}{G(t_{\ell_I})}, \frac{C(t_{\ell_{I+1}})}{G(t_{\ell_I})} \right)
\label{eq:lockinF}
\end{equation}
with $f(x,y)$ a positive function.

The rebalancing rule
\begin{equation*}
  C^\text{risky}(t_i) = \max\!\left(m \frac{C(t_i) - e^{-r(t_n-t_i)} G(t_i)}{C(t_i)} \ , \ 0\right) C(t_i)
\end{equation*}
is of the more general form
\begin{equation*}
	C^\text{risky}(t_i) =  w\!\left(t_i,\frac{C(t_i)}{G(t_i)}\right) \, C(t_i)
\end{equation*}

We are concerned with the pricing of a European derivative product $V$ written on the CPPI portfolio whose terminal payoff $P$ might depend on $C$ and $G$ at maturity. The fair price at time $t_{\ell_i}$ can be written as the risk-neutral expectation
\begin{equation}
\label{eq:pricing2d}
V\!\big(t_{\ell_I}, C(t_{\ell_I}), G(t_{\ell_I}) \big) = e^{-r(t_{\ell_{I+1}}-t_{\ell_I})} \, \mathbb E \left[ V\!\big(t_{\ell_{I+1}}, C(t_{\ell_{I+1}}), G(t_{\ell_{I+1}}) \big) \, \big\vert \, \mathcal{F}_{t_{\ell_I}} \right]
\end{equation}
This problem is two-dimensional as one must know the joint law of $C$ and $G$ to apply this formula.
However, beautiful simplifications occur in the case of a homogeneous payoff of the form
\begin{equation}
\label{eq:payoff}
  P(C,G) = G \, \widetilde P\!\left(\frac{C}{G}\right)
\end{equation}
This includes the CPPI portfolio itself or options with strike in percentage of the final guarantee. Owing to the special form of the payoff we can state the following proposition.
\begin{prop}
\label{prop:lockin}
Consider the same hypothesis as in proposition \ref{prop:markov}, except for the fact that the guarantee is not constant but is rescaled at dates $t_{\ell_I}$ along equation \eqref{eq:lockinF}. For a European derivative product whose payoff at maturity is given by equation~\eqref{eq:payoff}, the fair price at date $t_{\ell_I}$ is homogeneous:
\begin{equation}
\label{eq:proposition1}
    V\!\big(t_{\ell_I}, C(t_{\ell_I}), G(t_{\ell_I}) \big) = G(t_{\ell_I}) \, \widetilde V\!\left(t_{\ell_I}, \frac{C(t_{\ell_I})}{G(t_{\ell_I})}\right)
\end{equation}
and $\widetilde V$ satisfies the recursion formula
\begin{equation}
\label{eq:proposition2}
    \widetilde V\!\left(t_{\ell_I}, \frac{C(t_{\ell_I})}{G(t_{\ell_I})}\right) = e^{-r(t_{\ell_{I+1}}-t_{\ell_I})} \int \mathrm d y \ \widetilde{\Phi}_{t_{\ell_I},t_{\ell_{I+1}}}\!\left(
    \frac{C(t_{\ell_I})}{G(t_{\ell_I})},y\right) \ \widetilde V\!\left(t_{\ell_{I+1}}, y\right)
\end{equation}
with the kernel
\begin{equation}
\label{eq:proposition3}
    \widetilde{\Phi}_{t_{\ell_I},t_{\ell_{I+1}}}\!\left(x, y\right) = \int \mathrm d z \ \delta\!\left( y - \frac{z}{f(x,z)}\right) f(x,z) \, \Phi_{t_{\ell_I},t_{\ell_{I+1}}}\!(x, z)
\end{equation}
In this equation, $\Phi_{t_{\ell_I},t_{\ell_{I+1}}}\!(x, z)$ is the transition operator defined in proposition \ref{prop:markov}.
\end{prop}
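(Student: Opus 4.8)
The plan is to argue by backward induction on the lock-in dates $t_{\ell_I}$, $I=N,N-1,\ldots,0$, that the value function is homogeneous of degree one in $(C,G)$, i.e.\ that \eqref{eq:proposition1} holds, and to read off the recursion \eqref{eq:proposition2}--\eqref{eq:proposition3} as a byproduct of the inductive step. The base case $I=N$ is immediate: at maturity $t_n=t_{\ell_N}$ the payoff is $P(C,G)=G\,\widetilde P(C/G)$ by \eqref{eq:payoff}, so \eqref{eq:proposition1} holds with $\widetilde V(t_n,\cdot)=\widetilde P(\cdot)$.

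For the inductive step, assume $V\!\big(t_{\ell_{I+1}},C,G\big)=G\,\widetilde V\!\big(t_{\ell_{I+1}},C/G\big)$. Between the two consecutive lock-in dates $t_{\ell_I}$ and $t_{\ell_{I+1}}$ the guarantee is frozen at $G(t_{\ell_I})$, since lock-in happens only at the endpoints; on this span the rebalancing rule is therefore exactly of the form covered by proposition \ref{prop:markov} with constant guarantee $G(t_{\ell_I})$. Applying that proposition over each intermediate rebalancing sub-period and composing the one-period kernels (Chapman--Kolmogorov), the conditional law given $\mathcal F_{t_{\ell_I}}$ of the normalized variable $z:=C(t_{\ell_{I+1}})/G(t_{\ell_I})$ depends only on $x:=C(t_{\ell_I})/G(t_{\ell_I})$ and is governed by a transition operator, which is the $\Phi_{t_{\ell_I},t_{\ell_{I+1}}}(x,z)$ appearing in \eqref{eq:proposition3}. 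Moreover, conditionally on $\mathcal F_{t_{\ell_I}}$ the value $G(t_{\ell_{I+1}})$ is not independent randomness: by \eqref{eq:lockinF} it equals $G(t_{\ell_I})\,f(x,z)$, and consequently $C(t_{\ell_{I+1}})/G(t_{\ell_{I+1}})=z/f(x,z)$.

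Substituting the induction hypothesis and these two identities into the pricing relation \eqref{eq:pricing2d} gives
\[
V\!\big(t_{\ell_I},C,G\big)=e^{-r(t_{\ell_{I+1}}-t_{\ell_I})}\,G(t_{\ell_I})\int\mathrm d z\ \Phi_{t_{\ell_I},t_{\ell_{I+1}}}(x,z)\, f(x,z)\,\widetilde V\!\left(t_{\ell_{I+1}},\frac{z}{f(x,z)}\right).
\]
Factoring out $G(t_{\ell_I})$ already exhibits the homogeneity \eqref{eq:proposition1}. Inserting $1=\int\mathrm d y\,\delta\!\left(y-z/f(x,z)\right)$ inside the integral and exchanging the order of integration collapses the $z$-integral into the kernel \eqref{eq:proposition3} and leaves the one-dimensional recursion \eqref{eq:proposition2}, which closes the induction.

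The main obstacle — really the only place where care is needed — is the identification of the conditional law of $z$ given $\mathcal F_{t_{\ell_I}}$ with a function of $x$ alone: one must check that ``$G$ constant between lock-in dates'' genuinely permits applying proposition \ref{prop:markov} sub-period by sub-period and that the resulting kernels compose, and one should also note that the change of variables hidden in the $\delta$-function manipulation is meaningful only when $f(x,\cdot)>0$, so that $z\mapsto z/f(x,z)$ is well defined — which is exactly the positivity assumed for $f$ after \eqref{eq:lockinF}. The remaining manipulations are formal and parallel to the proof of proposition \ref{prop:markov}.
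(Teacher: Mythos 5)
Your proposal is correct and follows essentially the same route as the paper's proof: backward induction on lock-in dates, application of Proposition~\ref{prop:markov} with the guarantee frozen at $G(t_{\ell_I})$ over the span between lock-ins, the deterministic identity $G(t_{\ell_{I+1}})=G(t_{\ell_I})f(x,z)$, and insertion of a delta function to extract the kernel \eqref{eq:proposition3}. The only cosmetic difference is that the paper phrases the step through the pre-lock-in time $t_{\ell_{I+1}}^-$ and the invariance of the portfolio value at lock-in, whereas you substitute directly into \eqref{eq:pricing2d}; these are equivalent.
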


\begin{proof}
At time $t_n$, equation~\eqref{eq:proposition1} is verified due to equation~\eqref{eq:payoff}. We assume now equation~\eqref{eq:proposition1} holds at time $t_{\ell_{I+1}}$.
We denote by $t_{\ell_{I+1}}^-$ a time just before the profit lock-in, \emph{i.e.} before the jump in $G(t)$, but after possible jumps in the risky asset $S(t)$ (this describes the state of the system before applying the lock-in). As there is no profit lock-in between $t_{\ell_I}$ and $t_{\ell_{I+1}}^-$, we can use proposition~\ref{prop:markov} with transition operator $\Phi_{t_{\ell_I},t_{\ell_{I+1}}^-}$ and write
\begin{equation}
V\!\big(t_{\ell_I}, C(t_{\ell_I}), G(t_{\ell_I}) \big) = e^{-r(t_{\ell_{I+1}}-t_{\ell_I})} \int \mathrm d y \, \Phi_{t_{\ell_I},t_{\ell_{I+1}}^-}\!\left(\frac{C(t_{\ell_I})}{G(t_{\ell_I})} , y \right) V\!\left(t_{\ell_{I+1}}^-, G(t_{\ell_I}) y, G(t_{\ell_I}) \right)
\label{eq:lockinRec}
\end{equation}

When the lock-in is performed, the portfolio value is not changed:
\begin{equation*}
V\!\left(t_{\ell_{I+1}}^-, G(t_{\ell_I}) y, G\!\left(t_{\ell_I}\right) \right)
=
V\!\left(t_{\ell_{I+1}} , G(t_{\ell_I}) y , G\!\left(t_{\ell_{I+1}}\right) \right)
\end{equation*}
This value can be rewritten further using the recursion hypothesis~\eqref{eq:proposition1} as
\begin{equation}
V\!\left(t_{\ell_{I+1}}^-, G(t_{\ell_I}) y, G(t_{\ell_I}) \right) = G\!\left(t_{\ell_{I+1}}\right) \widetilde V\!\left(t_{\ell_{I+1}}, \frac{G(t_{\ell_I}) y}{G\!\left(t_{\ell_{I+1}} \right)} \right)
\label{eq:lockinV}
\end{equation}

The profit lock-in is deterministic as a function of $G(t_{\ell_I})$, $C(t_{\ell_I})$ and $C(t_{\ell_{I+1}})$. It is given by equation \eqref{eq:lockinF} which reads here
\begin{equation}
\label{eq:GIplus1}
G\!\left(t_{\ell_{I+1}}\right) = G\!\left(t_{\ell_I}\right) f\!\left( \frac{C(t_{\ell_I})}{G(t_{\ell_I})} , y \right)
\end{equation}
Using equations~\eqref{eq:lockinV} and~\eqref{eq:GIplus1}, equation \eqref{eq:lockinRec} is rewritten as
\begin{multline*}
V\!\big(t_{\ell_I}, C(t_{\ell_I}), G(t_{\ell_I}) \big) = \\
e^{-r(t_{\ell_{I+1}}-t_{\ell_I})} G(t_{\ell_I}) \int \mathrm d y \, \Phi_{t_{\ell_I},t_{\ell_{I+1}}^-}\!\left( \frac{C(t_{\ell_I})}{G(t_{\ell_I})} , y \right) f\!\left( \frac{C(t_{\ell_I})}{G(t_{\ell_I})},y\right) \widetilde V\!\left(t_{\ell_{I+1}}, \frac{y}{f\!\left( \frac{C(t_{\ell_I})}{G(t_{\ell_I})},y\right)} \right)
\end{multline*}

Inserting
\begin{equation*}
\int \mathrm d z \ \delta\!\left( z - \frac{y}{f\!\left( \frac{C(t_{\ell_I})}{G(t_{\ell_I})},y\right)}\right) = 1
\end{equation*}
we find that $V\!\big(t_{\ell_I}, C(t_{\ell_I}), G(t_{\ell_I}) \big)$ is indeed of the form given by equation \eqref{eq:proposition1} with equations \eqref{eq:proposition2} and \eqref{eq:proposition3} satisfied.
\end{proof}

\begin{rem}
Proposition~\ref{prop:lockin} can be generalized to a homogeneous payoff of degree $\alpha$:
\begin{equation*}
P(C,G) = G^\alpha \, \widetilde P\!\left(\frac{C}{G}\right)
\end{equation*}
The only modifications are in equation~\eqref{eq:proposition3} where $G\!\left(t_{\ell_I}\right)$ is replaced with $G\!\left(t_{\ell_I}\right)^\alpha$ and in equation~\eqref{eq:proposition3} where $f(x,z)$ is replaced with $f(x,z)^\alpha$.
\end{rem}

\begin{rem}
Proposition~\ref{prop:lockin} can also be generalized to a Bermudan payoff with exercise allowed at lock-in or reallocation dates: as usual, the maximum of values with and without exercising has to be taken during the backward propagation.
\end{rem}

\subsection{Numerical implementation}

The numerical implementation of proposition \ref{prop:lockin} is straightforward, building on the basis of section \ref{sec:numericalBasic}.
\begin{itemize}
\item
A grid is built as in the basic case of section \ref{sec:numericalBasic}, such that $C(t_0)/G(t_0) = h_{j_0}$ for some index $j_0$.
\item
From the local transition matrices $M^{(i,i+1)}_{jk}$ of section 1, we compute the transition matrices between two lock-in dates as the matrix product
\begin{equation*}
  M^{(\ell_I,\ell_{I+1})} \ =  \ M^{(\ell_I,\ell_I+1)} \ M^{(\ell_I+1,\ell_I+2)} \ \ldots \ M^{(\ell_{I+1}-1,\ell_{I+1})}
\end{equation*}
\item
Using the lock-in function $f(x,y)$, the discrete analogue of the kernel $\widetilde{\Phi}_{t_{\ell_I},t_{\ell_{I+1}}}(x,y)$ of equation \eqref{eq:proposition3} is a matrix $\widetilde{M}^{(I,I+1)}_{jk}$. We compute first
\begin{eqnarray*}
\widetilde{Q}^{(I,I+1)}_{jk} &=& \int_{h_k}^{h_{k+1}}\!\mathrm d y \, \widetilde{\Phi}_{(t_{l_I},t_{l_{I+1}})}(h_j,y)
\\
&=& \int\mathrm d z \, \mathbbm 1_{\left[h_k;h_{k+1}\right[}\!\left(\frac{z}{f(h_j,z)}\right) \, f(h_j,z) \, \Phi_{(t_{l_I},t_{l_{I+1}})}(h_j,z)
\\
&\simeq&  \sum_l \mathbbm 1_{\left[h_k;h_{k+1}\right[}\!\left(\frac{h_l}{f(h_j,h_l)}\right) \, f(h_j,h_l) \, M^{(\ell_I,\ell_{I+1})}_{jl}
\\
\widetilde{Q}^{1\, (I,I+1)}_{jk} &=& \int_{h_k}^{h_{k+1}}\!\mathrm d y \, \widetilde{\Phi}_{(t_{l_I},t_{l_{I+1}})}(h_j,y) \, y
\\
&=& \int\mathrm d z \, \mathbbm 1_{\left[h_k;h_{k+1}\right[}\!\left(\frac{z}{f(h_j,z)}\right) \, \Phi_{(t_{l_I},t_{l_{I+1}})}(h_j,z) \, z
\\
&\simeq&  \sum_l \mathbbm 1_{\left[h_k;h_{k+1}\right[}\!\left(\frac{h_l}{f(h_j,h_l)}\right) \, M^{(\ell_I,\ell_{I+1})}_{jl} \, h_l
\end{eqnarray*}
We define the matrix $\widetilde{M}^{(I,I+1)}$ from $\widetilde{Q}^{(I,I+1)}$ and $\widetilde{Q}^{1\, (I,I+1)}$ exactly as in section \ref{sec:numericalBasic}:
\begin{equation*}
\widetilde M^{(I,I+1)}_{jk} = \widetilde M^{+ \, (I,I+1)}_{jk} + \widetilde M^{- \, (I,I+1)}_{jk}
\end{equation*}
with
\begin{eqnarray*}
\widetilde M^{+ \, (I,I+1)}_{jk} &=& \frac{h_{k+1} \widetilde Q^{(I,I+1)}_{jk} - \widetilde Q^{1\, (I,I+1)}_{jk}}{h_{k+1}-h_k}
\\
\widetilde M^{- \, (I,I+1)}_{jk} &=& \frac{\widetilde Q^{1\, (I,I+1)}_{j,k-1} - h_{k-1} \widetilde Q^{(I,I+1)}_{j,k-1}}{h_k-h_{k-1}}
\end{eqnarray*}
\item
The rescaled payoff $\widetilde V$ of proposition \ref{prop:lockin} is discretized as $\widetilde V^{(I)}_j = \widetilde V(t_{\ell_I},h_j)$. It is propagated backward from lock-in date to lock-in date through equation \eqref{eq:proposition2} which becomes
\begin{equation*}
\widetilde V^{(I)}_j = e^{-r (t_{\ell_{I+1}} - t_{\ell_I})} \sum_k \widetilde M^{(I,I+1)}_{jk} \widetilde V^{(I+1)}_k
\end{equation*}
with terminal condition
\begin{equation*}
\widetilde V^{(N)}_j = \widetilde P(h_j)
\end{equation*}
\item
The fair price of the derivative at time 0 is given by
\begin{equation*}
V(t_0,C(t_0),G(t_0)) = G(t_0) \widetilde V^{(0)}_{j_0}
\end{equation*}

\end{itemize}

\subsection{Probability transition}

Proposition \ref{prop:lockin} provides a pricing algorithm for payoff which are homogeneous in $C,G$. The distribution of the portfolio value can also be computed, which may be useful for valuation and risk management of options with strike given as a fixed amount or for computing the final probability density function of a CPPI portfolio.
\begin{prop}
The probability transition operator between two lock-in dates $t_{\ell_I}$ and $t_{\ell_J}$
\begin{equation*}
  \Psi_{t_{\ell_I},t_{\ell_J}}(x,y) = \mathbb E\!\left[ \delta\!\left(\frac{C(t_{\ell_J})}{G(t_{\ell_I})} - y\right) \ \Big\vert \ \frac{C(t_{\ell_I})}{G(t_{\ell_I})} = x \right]
\end{equation*}
can be computed from the transition operator without profit lock-in as
\begin{equation}
  \Psi_{t_{\ell_I},t_{\ell_J}}(x,y) = \int \mathrm d z \ \frac{1}{f(x,z)}\ \Phi_{t_{\ell_I},t_{\ell_{I+1}}}(x,z) \ \Psi_{t_{\ell_{I+1}},t_{\ell_J}}\!\left(\frac{z}{f(x,z)},\frac{y}{f(x,z)}\right)
  \label{eq:lockinAbsolute}
\end{equation}
\end{prop}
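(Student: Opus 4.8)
The plan is to establish \eqref{eq:lockinAbsolute} as a one-step recursion by conditioning on the information available just after the lock-in at $t_{\ell_{I+1}}$, and then combining the pre-lock-in transition kernel $\Phi$ of Proposition \ref{prop:markov} with the deterministic rescaling of the guarantee given by \eqref{eq:lockinF}.

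First I would write, using the tower property,
\begin{equation*}
\Psi_{t_{\ell_I},t_{\ell_J}}(x,y) = \mathbb E\!\left[ \mathbb E\!\left[ \delta\!\left(\frac{C(t_{\ell_J})}{G(t_{\ell_I})} - y\right) \,\Big\vert\, \mathcal F_{t_{\ell_{I+1}}}\right] \,\Big\vert\, \frac{C(t_{\ell_I})}{G(t_{\ell_I})} = x \right].
\end{equation*}
Since the lock-in leaves the portfolio value unchanged, $C(t_{\ell_{I+1}}^-) = C(t_{\ell_{I+1}})$, and there is no lock-in between $t_{\ell_I}$ and $t_{\ell_{I+1}}^-$; hence by Proposition \ref{prop:markov} (applied over the intervening rebalancing sub-periods and composed through Chapman--Kolmogorov) the variable $z := C(t_{\ell_{I+1}})/G(t_{\ell_I})$ has conditional density $\Phi_{t_{\ell_I},t_{\ell_{I+1}}}(x,z)$ given $C(t_{\ell_I})/G(t_{\ell_I}) = x$. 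I would then invoke \eqref{eq:lockinF} in the form $G(t_{\ell_{I+1}}) = G(t_{\ell_I})\, f(x,z)$.

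Next I would rewrite the inner conditional expectation. Writing $C(t_{\ell_J})/G(t_{\ell_I}) = f(x,z)\cdot C(t_{\ell_J})/G(t_{\ell_{I+1}})$ and using the scaling rule $\delta(a\,u - y) = \frac{1}{a}\,\delta(u - y/a)$ gives, conditionally on the post-lock-in state,
\begin{equation*}
\mathbb E\!\left[ \delta\!\left(\frac{C(t_{\ell_J})}{G(t_{\ell_I})} - y\right) \,\Big\vert\, \mathcal F_{t_{\ell_{I+1}}}\right] = \frac{1}{f(x,z)}\, \mathbb E\!\left[ \delta\!\left(\frac{C(t_{\ell_J})}{G(t_{\ell_{I+1}})} - \frac{y}{f(x,z)}\right) \,\Big\vert\, \mathcal F_{t_{\ell_{I+1}}}\right].
\end{equation*}
Because the rebalancing rule depends only on $C/G$, the scaled floor scales with $G$, and the lock-in rule \eqref{eq:lockinF} is homogeneous — the very properties exploited in Proposition \ref{prop:lockin} — the normalized process $C(t)/G(t_{\ell_{I+1}})$ observed after $t_{\ell_{I+1}}$ is again a scale-invariant Markov chain whose kernel from $t_{\ell_{I+1}}$ to $t_{\ell_J}$ is exactly $\Psi_{t_{\ell_{I+1}},t_{\ell_J}}$, evaluated at the state $C(t_{\ell_{I+1}})/G(t_{\ell_{I+1}}) = z/f(x,z)$. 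This turns the last conditional expectation into $\Psi_{t_{\ell_{I+1}},t_{\ell_J}}\!\big(z/f(x,z),\, y/f(x,z)\big)$, and integrating over $z$ against $\Phi_{t_{\ell_I},t_{\ell_{I+1}}}(x,z)$ yields \eqref{eq:lockinAbsolute}.

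The step I expect to require the most care is the justification that, conditionally on $\mathcal F_{t_{\ell_{I+1}}}$, the future normalized value $C(t_{\ell_J})/G(t_{\ell_{I+1}})$ depends only on $C(t_{\ell_{I+1}})/G(t_{\ell_{I+1}})$, with transition kernel $\Psi_{t_{\ell_{I+1}},t_{\ell_J}}$ regardless of the actual level of $G(t_{\ell_{I+1}})$; this is precisely the scale invariance of the CPPI-with-lock-in dynamics and should be argued exactly as in the proof of Proposition \ref{prop:lockin}, or by induction on $J-I$ with base case $J=I+1$, where $\Psi_{t_{\ell_I},t_{\ell_{I+1}}} = \Phi_{t_{\ell_I},t_{\ell_{I+1}}}$ because the lock-in does not change the portfolio value. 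The remaining manipulations — composing $\Phi$ across intermediate rebalancing dates and the bookkeeping with the Jacobian $1/f(x,z)$ of the Dirac delta — are routine once this invariance is in place.
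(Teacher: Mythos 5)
Your proposal is correct and follows essentially the same route as the paper's own proof: conditioning on the intermediate post-lock-in state $z = C(t_{\ell_{I+1}})/G(t_{\ell_I})$, using the deterministic relation $G(t_{\ell_{I+1}}) = G(t_{\ell_I}) f(x,z)$ from \eqref{eq:lockinF} together with the Dirac-delta scaling to produce the factor $1/f(x,z)$, and identifying the remaining kernel as $\Psi_{t_{\ell_{I+1}},t_{\ell_J}}$ at the rescaled arguments. The paper compresses all of this into a single Chapman--Kolmogorov-style identity, whereas you make explicit the scale-invariance step it leaves implicit; the content is the same.
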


\begin{proof}
This result is a rewriting of
\begin{multline*}
\mathbb E\!\left[ \delta\!\left(\frac{C(t_{\ell_J})}{G(t_{\ell_I})} - y\right) \ \Big \vert \ \frac{C(t_{\ell_I})}{G(t_{\ell_I})} = x \right] =
\\
\int \mathrm d z \ \mathbb E\!\left[ \delta\!\left(\frac{C(t_{\ell_{I+1}})}{G(t_{\ell_I})} - z\right) \ \Big\vert \ \frac{C(t_{\ell_I})}{G(t_{\ell_I})} = x \right]
\mathbb E\!\left[ \delta\!\left(\frac{C(t_{\ell_J})}{G(t_{\ell_{I+1}})} \frac{G(t_{\ell_{I+1}})}{G(t_{\ell_I})} - y \right) \ \Big\vert \ \frac{C(t_{\ell_{I+1}})}{G(t_{\ell_{I+1}})} \frac{G(t_{\ell_{I+1}})}{G(t_{\ell_I})} = z \right]
\end{multline*}
using equation~\eqref{eq:lockinF}.
\end{proof}

Supposing for simplicity that $\ell_0=0$ and $\ell_N=n$, the distribution of the portfolio at final maturity $t_n$ can then be obtained as
\begin{equation}
  \mathbb E\Big[ \delta\big(C(t_n)- y\big) \ \big\vert \ C(t_0) = x \Big]
  =
  \frac{1}{G(t_0)} \Psi_{t_0,t_n}\!\left( \frac{x}{G(t_0)} ,  \frac{y}{G(t_0)}  \right)
\end{equation}

The main difference with the homogeneous payoff is the following: whereas the dynamics of $C(t_{\ell_I})/G(t_{\ell_I})$ taken at lock-in dates is Markovian, it is not true for the dynamics of $C(t_{\ell_I})$ itself.

The numerical implementation is similar to the other cases. Equation~\eqref{eq:lockinAbsolute} is not a product of operators. As a consequence, it is also numerically more time-consuming.

\section{Numerical examples}

In order to illustrate the smoothness of the results, we plot in figure \ref{distrib} the probability distribution at maturity for three different CPPI strategies. We take a risky asset which follows the jump-diffusion model introduced by Kou \cite{kou:jdm}. There are upward and downward jumps, both exponentially distributed with distinct parameters. The diffusion has volatility 20\%, downward jump with intensity 0.1~$\mathrm{yr}^{-1}$ and mean value 0.1, upward jumps with intensity 0.1~$\mathrm{yr}^{-1}$ and mean value 0.05. We consider strategies on this risky asset with a maturity of 10 years and weekly rebalancing. The risk-free rate gives an initial zero-coupon value of 0.606.

\begin{itemize}
\item
We consider first the vanilla case of section \ref{sec:basic} with multiplier $m=4$:
\[
    w \big(t, C \big) = \max\!\left( m \frac{C-H(t)}{C}, 0 \right)
\]
\item
We add a cap (150\%) to the risky asset exposure so that
\[
    w \big(t, C \big) = \min\!\left( \text{Cap}\, , \, \max\!\left( m \frac{C-H(t)}{C}, 0 \right) \right)
\]
\item
Removing the cap, we finally incorporate to the strategy an annual lock-in of 75\% as described in section \ref{sec:lockin}.
\end{itemize}

We refer the reader to \cite{paulot2009epc} for further discussions of the effects of the various features commonly found in CPPI-based contracts.

\begin{figure}[!h]
\centering
\includegraphics[width=\textwidth]{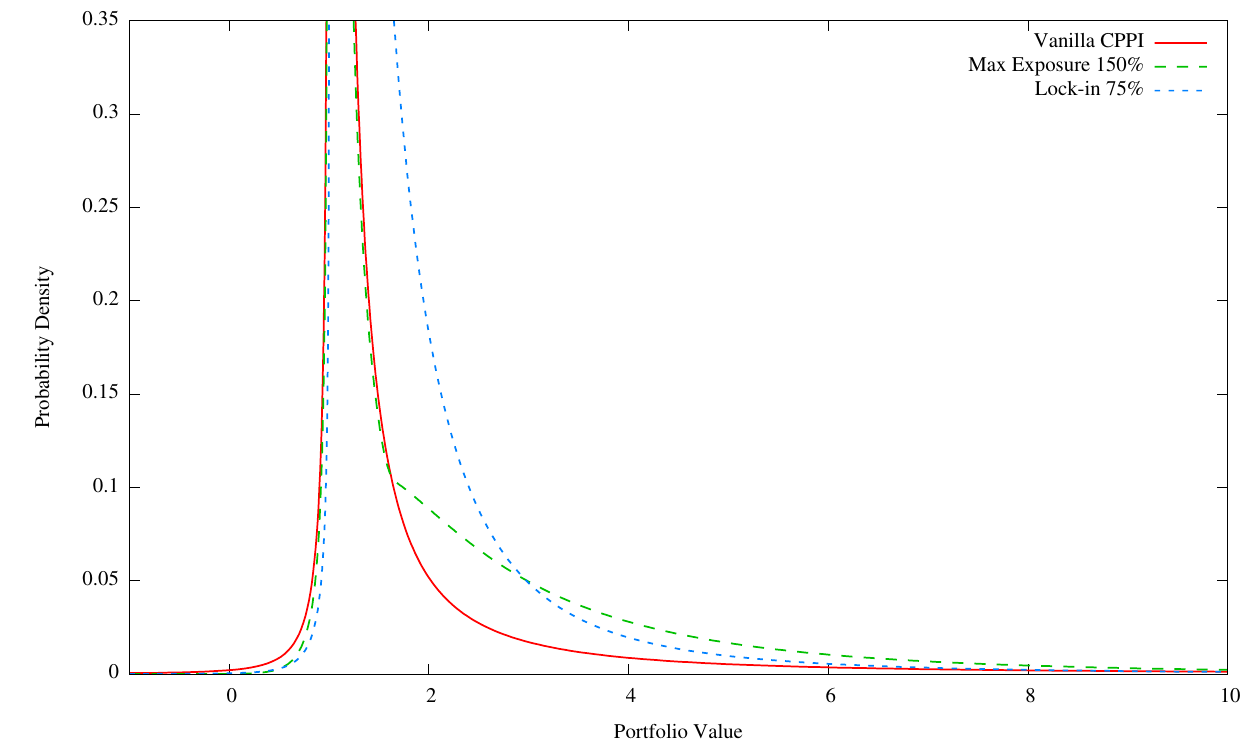}
\includegraphics[width=\textwidth]{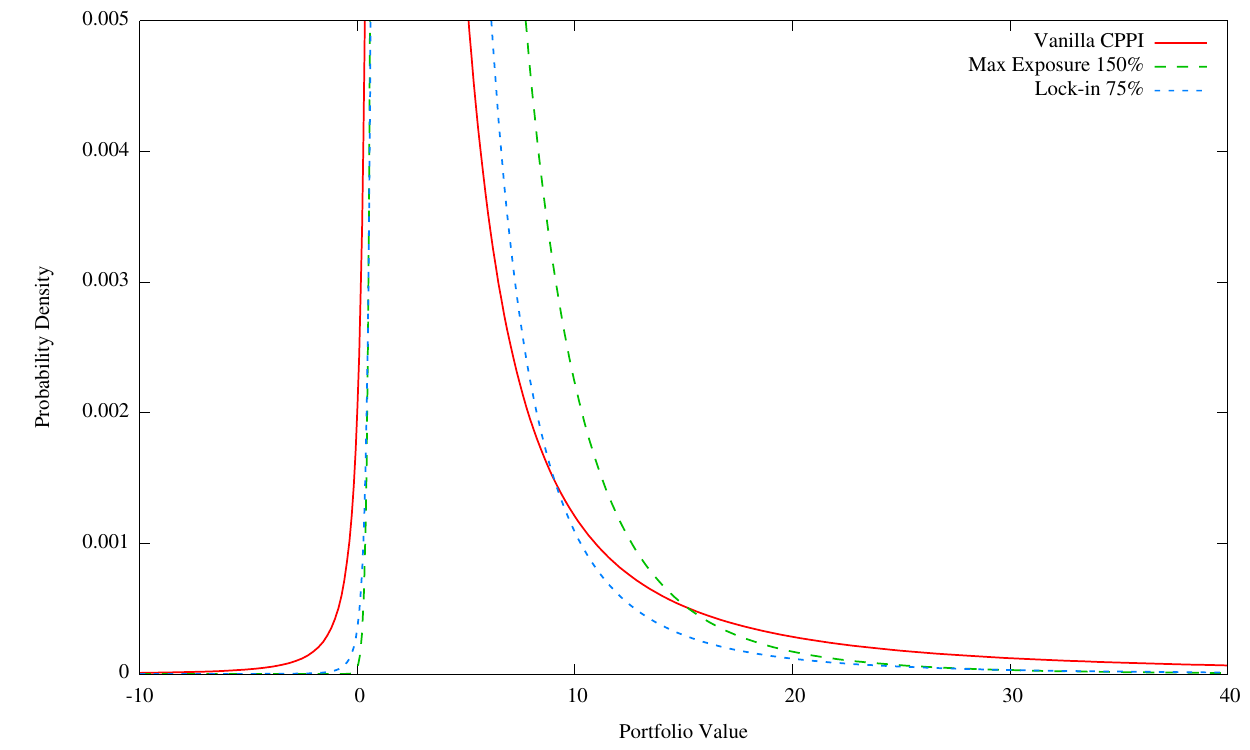}
\caption{\emph{Probability Density of a CPPI portfolio value at maturity for three different strategies. For the first curve, the CPPI has maturity 10 years, weekly rebalancing, multiplier 4. The initial investment and the initial guaranteed level are 1. The mean price of the strategy is 1.65. The risky asset follows a Brownian motion with 20\% constant volatility with additional jumps (Kou model): downward jump with intensity 0.1~$\mathrm{yr}^{-1}$ and mean value 0.1, upward jumps with intensity 0.1~$\mathrm{yr}^{-1}$ and mean value 0.05. For the second curve the risky asset weighting is capped to 150\%. For the third one there is no cap on the exposure but a profit lock-in of 75\% of the performance is performed annually. The second graph is a zoom on the tails of the distribution.}}
\label{distrib}
\end{figure}

\FloatBarrier

\section{Conclusion}

We introduced an efficient scheme for the pricing of CPPI strategies and options. Instead of following the underlying spot, the portfolio value, the risky asset weighting and the guaranteed amount as in a Monte-Carlo pricing or a classical PIDE backward propagation, only the CPPI portfolio value at rebalancing dates is considered (or the ratio of the CPPI over the guaranteed amount in the case of lock-in). We proved that under the hypothesis of  independent increments of the logarithm of the underlying, the portfolio value at reallocation dates (or lock-in dates) is a discrete time Markov process in  one single variable. We derived a natural pricing scheme which uses this property to price both European and Bermudan derivatives on the CPPI portfolio. Numerical experiments exhibit a very fast convergence. This can be used to estimate the gap risk of guaranteed CPPI products, price options on CPPI strategies and design hedging strategies.

\pagebreak
\bibliographystyle{alpha}
\bibliography{CPPI2}

\end{document}